\def\be{\begin{equation}}
\def\ee{\end{equation}}
\def\ba{\begin{array}{c}}
\def\ea{\end{array}}
\newcommand{\bea}{\begin{eqnarray}}
\newcommand{\eea}{\end{eqnarray}}
\newcommand{\bbr}{\br\!\br}
\newcommand{\kkt}{\kt\!\kt}
\newcommand{\kt}{\rangle}
\newcommand{\br}{\langle}
\newtheorem{thm}{Theorem}
\newtheorem{cor}[thm]{Corollary}
\newtheorem{lemma}[thm]{Lemma}
\newenvironment{proof}{\noindent
 {\bf Proof.}}{\hfill$\square$\vspace{3mm}\endtrivlist}
\begin{document}

\begin{center}

{\Large \bf

Zig-zag-matrix algebras and solvable quasi-Hermitian quantum models

}

\vspace{0.8cm}

  {\bf Miloslav Znojil}$^{1,2,3}$

\end{center}

%
%
%\vspace{1mm} Nuclear Physics Institute of the CAS, Hlavn\'{\i} 130,
%250 68 \v{R}e\v{z}, Czech Republic
%
%

\vspace{10mm}

 $^{1}$ {The Czech Academy of Sciences,
 Nuclear Physics Institute,
 Hlavn\'{\i} 130,
250 68 \v{R}e\v{z}, Czech Republic, {e-mail: znojil@ujf.cas.cz}}

 %\footnote{{e-mail: znojil@ujf.cas.cz}

 $^{2}$ {Department of Physics, Faculty of
Science, University of Hradec Kr\'{a}lov\'{e}, Rokitansk\'{e}ho 62,
50003 Hradec Kr\'{a}lov\'{e},
 Czech Republic}

$^{3}$ Institute of System Science, Durban University of Technology,
Durban, South Africa

%\newpage

%{...}
% of quantum, methodically motivated
%physics

\vspace{10mm}

\section*{Abstract}

In quantum mechanics of unitary systems using non-Hermitian (or,
more precisely, $\Theta-$quasi-Hermitian) Hamiltonians $H$ such that
$H^\dagger \Theta=\Theta\,H$, the exactly solvable $M-$level
bound-state models with arbitrary $M\leq \infty$ are rare. A new
class of such models is proposed here, therefore. Its exact
algebraic solvability (involving not only the closed formulae for
wave functions but also the explicit description of all of the
eligible metrics $\Theta$) was achieved due to an extremely sparse
(viz., just $(2M-1)-$parametric) but still nontrivial
``zig-zag-matrix`` choice of the form of $H$.

\section*{Keywords}
.

non-Hermitian quantum mechanics of unitary systems;

a zig-zag-matrix class of $N-$state solvable models;

closed formulae for  wave functions;

closed formula for general physical inner-product metric

\newpage

\section{Introduction}

One of the key obstacles encountered during transition from
classical to quantum mechanics is that the corresponding evolution
equations become operator equations. For this reason the
experimentally testable quantum-theoretical predictions become, in
general, incomparably more difficult. As a consequence, our
understanding of the quantum dynamics becomes only too often
dependent on an analysis mediated by some thoroughly simplified
models of the physical reality {in which, typically,
a given selfadjoint Hamiltonian can be easily diagonalized,
$\mathfrak{h} \to \mathfrak{h}_{diagonal}$}.

{In 1956, Freeman Dyson \cite{Dyson} had to deal with
a fairly complicated multifermionic Hamiltonian $\mathfrak{h}$
for which the convergence
of the conventional numerical
diagonalization algorithms happened to be prohibitively slow.
Still, he managed to find a way out of the difficulty.
His construction of the bound states became nicely convergent
when he preconditioned his Hamiltonian,
 \be
 \mathfrak{h}\ \to \  H = \Omega^{-1}\,\mathfrak{h}\,\Omega\,.
 \label{jednolit}
 \ee
The essence of his convergence-acceleration recipe
lied in a judicious guess of a sufficiently effective
preconditioning (\ref{jednolit}) mediated by a suitable
invertible
mapping $\Omega$. In the language of physics,
this choice just reflected
the role of the correlations in the many-body system
in question. In this sense, the Dyson's}
simplification-oriented
model-building strategy
{found a number of applications,
first of all, in nuclear physics
where the role of the short-range correlations
is fairly well understood
as well as sufficiently easily simulated \cite{Jenssen}.
}

{The
originality of the Dyson's
innovation was that his mappings $\Omega$
were allowed non-unitary,} $\Omega^\dagger\,\Omega \neq I$.
The simplification
(\ref{jednolit}) has been achieved, paradoxically,
at an expense of the loss of the Hermiticity of the Hamiltonian.
{In the language of mathematics,
this can be perceived as} an
unusual, non-unitary transition from a
conventional Hilbert space (say, ${\cal L}$)  to another, auxiliary
but user-friendlier Hilbert space (say, ${\cal H}_{math}$).
In
the language of operators one moves from the conventional textbook
representation of a realistic
Hamiltonian which is self-adjoint in ${\cal L}$,
$\mathfrak{h}=\mathfrak{h}^\dagger$,
to its isospectral
(and, presumably, significantly simpler) manifestly non-Hermitian
avatar  $H \neq H^\dagger$ in ${\cal
H}_{math}$.

In  1992, Scholtz et al \cite{Geyer} proposed a
{different, albeit closely related
model-building} strategy. {These
authors assumed that we are given a non-Hermitian operator $H$
(or rather a set of such operators) in advance.
Under this assumption
they described the way how this operator or operators could
``constitute a consistent quantum mechanical system''.
Thus, in our present notation they just considered
an inverted correspondence (\ref{jednolit}),}
 \be
 H\ \to \   \mathfrak{h}= \Omega\,H\,\Omega^{-1}\,.
 \label{oposm}
 \ee
In {such a} deeply innovative approach one {\em preselects\,} a
suitable tentative non-Hermitian candidate for the Hamiltonian $H
\neq H^\dagger$ {from the very beginning}. Although  {the approach}
has recently been enriched by the development of mathematical
techniques in which the feasibility of practical calculations has
been {enhanced (see, e.g., the more recent review \cite{ali})}, its
mathematical aspects are still full of open questions (see, e.g.,
monograph \cite{book}).

In applications,
naturally, an internal consistency of the theory
based on {reconstruction (\ref{oposm})}
must be guaranteed.
Thus, the spectrum of
$H$ must be real: In this respect it often helps when $H$ is chosen
parity-time symmetric \cite{BB,Carl}. Secondly, many rather
unpleasant emerging mathematical obstacles (see, e.g., their
descriptions in \cite{Trefethen,Siegl,Viola,Uwe}) may be
circumvented when the states of the system in question are
represented in an $M-$dimensional Hilbert space ${\cal
H}_{math}^{(M)}$ where $M$ is arbitrarily large but finite
\cite{Dyson,Geyer}.

Under these conditions (see also \cite{ali} for more details) the
implicit, hidden Hermiticity (or, in mathematics, quasi-Hermiticity
\cite{Dieudonne}) of the operator $H$ representing an input information about
dynamics has to be made explicit.
Once we abbreviate $\Omega^\dagger\,\Omega=\Theta$
(calling this product a ``physical Hilbert-space inner-product
metric''), the standard and conventional textbook self-adjointness
requirement $\mathfrak{h}=\mathfrak{h}^\dagger$ becomes formally
equivalent to the quasi-Hermiticity of $H$ in ${\cal
H}_{math}^{(M)}$,
 \be
 H^\dagger\,\Theta=\Theta\,H\,.
 \label{zasu}
 \ee
{This makes the reconstruction (\ref{oposm})
of $\mathfrak{h}$ redundant.}
In the words of review \cite{Geyer} one manages
{to find a physical inner-product}
metric $\Theta$
compatible
with Eq.~(\ref{zasu})
{``if it exists'' (i.e.,
just in certain parameter regimes)}.

For practical purposes the use of the quasi-Hermitian
formulation of quantum mechanics makes sense only if
Eq.~(\ref{zasu}) as well as the related bound-state Schr\"{o}dinger
equation
 \be
 H^{}\,|\psi_n^{}\kt=E_n^{}\,|\psi_n^{}\kt\,,\ \ \ \ \ n = 1,2, \ldots,
 M
 \label{SE}
 \ee
remain sufficiently user-friendly and solvable. In fact, there exist
not too many solvable models of such a type. One category of the
technical obstacles emerges when $H$ is a differential operator.
Indeed, as long as these operators are, typically, unbounded, the
abstract quantum theory of Ref.~\cite{Geyer} (where all of the
operators of observables have been assumed bounded) cannot be
applied.

Even when both of our above-mentioned Hilbert spaces ${\cal L}$ and
${\cal H}_{math}^{(M)}$ are kept finite-dimensional, $M < \infty$,
the literature offers just a few toy-matrix models $H$ which remain
exactly solvable, {at an arbitrary number of states
$M < \infty$,} in the manner which combines the availability of a
closed form of all of the solutions $|\psi_n\kt$ and $E_n$ of
Schr\"{o}dinger Eq.~(\ref{SE}) with the equally important
availability of a closed form of at least one of the solutions
$\Theta=\Theta(H)$ of Eq.~(\ref{zasu}).

{In these models (see, e.g.,
\cite{maximal,tridiagonal} or \cite{procA}, with further references)
one still has to work with the tridiagonal forms of the
Hamiltonians. In what follows we intend to propose the class of
solvable models in which the Hamiltonians form even a sparse-matrix
subset of the similar tridiagonal models. They will form} a new
exactly solvable family of unitary quasi-Hermitian quantum models.
{We will see that these models can be perceived as an
illustration of the situation in which the unitary quantum model
based on a manifestly non-Hermitian Hamiltonian $H\neq H^\dagger$
appears preferable and, not quite expectedly, technically simpler
than its isospectral Hermitian-matrix alternative of conventional
textbooks.}

\section{Exact solution of Schr\"{o}dinger equation}

It is not too surprising that in the majority of the realistic
applications of the bound-state Schr\"{o}dinger equations using a
self-adjoint phenomenological Hamiltonian $\mathfrak{h}$ people
recall the variational argument and approximations and keep the
dimension $M$ of the conventional textbook Hilbert space ${\cal L}$
finite {\cite{Dyson,Jenssen,Geyer}}.
Then, there are also no conceptual problems with the
linear-algebraic correspondence between ${\cal L}$ and ${\cal
H}_{math}^{(M)}$ and/or between $\mathfrak{h}$ and $H$ (cf.
Eq.~(\ref{jednolit})).

{The situation is different when the Hamiltonians
$\mathfrak{h}$ and/or $H$ are differential operators with $M=\infty$.
On positive side, the standard ```kinetic plus potential energy''
structure of such a class of operators makes them intuitively
acceptable on
physical
grounds: Typically,
this renders them  eligible in the role of prototype models
in quantum field theory \cite{Carl}.
For this reason, even on the level of quantum mechanics
the dedicated literature abounds with the exactly solvable models
\cite{ES} as well as with the quasi-exactly solvable models
\cite{QES} - \cite{QES3}
of such a type.}

{On negative side, the recent progress
in the
analysis of the $\mathfrak{h}\ \leftrightarrow \ H$ correspondence
led to several disappointing
disproofs of its existence \cite{book}. {\it Pars pro toto\,}
it is sufficient to mention papers \cite{Siegl,Viola}
containing the mathematically rigorous
disproofs of the existence of {\em any\,} self-adjoint
partner $\mathfrak{h}$ for the most popular
imaginary cubic oscillator Hamiltonian $H$ of Ref.~\cite{BB}.}

{After all, the very explicit words of warning were
already written in the older review \cite{Geyer}.
The authors required there that {\em any\,}
eligible non-Hermitian operator representing an observable
should be bounded.}
In other words,
{under the warmly recommended
auxiliary assumption $M<\infty$
the mathematics becomes perceivably simpler.}
The problems which remain to
be resolved
are purely technical, emerging usually just at
sufficiently large matrix dimensions $M \gg 1$ and requiring only
a sufficiently reliable numerical software.

The prevailing nature of results is then purely numerical.
The
exactly solvable bound-state models are rare. Even the
diagonalization of a next-to-diagonal (i.e., tridiagonal) matrix
form of $\mathfrak{h}$ may be ill-conditioned and just badly
convergent \cite{Wilkinson}. In this context the guiding mathematical
idea of our present project
was that one of the rarely emphasized consequences of the choice of
a non-Hermitian model $H$ with real spectrum is that its nontrivial
(i.e., non-diagonal) matrix representation can be ``sparse tridiagonal''.

An exciting formal appeal of the latter idea appeared accompanied
by the emerging possibility of its transfer to the
phenomenology and
physics of various lattice models \cite{Deguchi}.
Both of these observations led us directly to the introduction and study of
the $M$ by $M$ ``zig-zag-matrix'' (ZZM) Hamiltonians
 \be
 H=H^{(ZZM)}(\vec{a},\vec{c})=
 \left[ \begin {array}{cccccc}
     {\it a_1}&0&0&0&\ldots&
 \\\noalign{\medskip}{\it c_1}&{\it a_2}&{\it c_2}&0&\ldots&
 \\\noalign{\medskip}0&0&{\it a_3}&0&0&\ldots
 \\\noalign{\medskip}0&0&{\it c_3}&{\it a_4}&{\it c_4}&\ddots
 \\\noalign{\medskip}\vdots&\vdots&0&0&{\it a_5}&\ddots
 \\\noalign{\medskip}&&\vdots&\ddots&\ddots&\ddots
 \end {array} \right]\,
 \label{defzz}
 \ee
in which just $2M-1$ real parameters do not vanish.

A compact outline of some of the purely mathematical properties of
matrices (\ref{defzz}) may be found postponed to Appendix A below.
The bound-state spectrum of these matrices (i.e., of the
Hamiltonians of our present interest) coincides with the subset of
parameters $a_1, a_2, \ldots, a_M$ occupying the main diagonal (see
Lemma \ref{lemmajedna} in the Appendix). This means that the
unitarity of the evolution is guaranteed by the reality
of the spectrum of energies, i.e., by the reality
of these dynamical-input
parameters.

For the purposes of applications we are just left with the necessity
of the construction of the wave functions {i.e., in
the conventional Dirac's notation, of the column-vector solutions
$|\psi_1^{}\kt$, $|\psi_2^{}\kt$, \ldots of our Schr\"{o}dinger
Eq.~(\ref{SE}) corresponding to the respective bound-state energies
$E=a_n$ with $n=1,2,\ldots$. Surprisingly enough, these ket-vectors
can be obtained} in closed form. Incidentally, the construction is
most straightforward when $M=\infty$ {because in such
a case we do not need to separate the description of the solutions
at even and odd $M<\infty$.}

{Another useful trick used during the explicit
systematic construction of the solutions of our Schr\"{o}dinger
Eq.~(\ref{SE}) is that at any $M \leq \infty$ we can concatenate our
ket-vector columns into a single $M$ by $M$ matrix, say
 \be
 \{
 |\psi_1^{}\kt,|\psi_2^{}\kt,\ldots,|\psi_M^{}\kt
 \}=Q_{solution}\,.
 \label{matfor}
 \ee
Indeed, precisely the study of this matrix-of-solutions leads to the
following important result.}

\begin{lemma}
\label{lemmaacko} In Schr\"{o}dinger Eq.~(\ref{SE}) with $M=\infty$
and with the ZZM Hamiltonian $H=H^{(ZZM)}(\vec{a},\vec{c})$, the
column-vector eigenstates $|\psi_n^{}\kt$ corresponding to the
energies $E=a_n$ with $n=1,2,\ldots$ {and arranged in
matrix (\ref{matfor}) acquire precisely the ZZM-matrix form defined
in terms of suitable vectors of parameters
$\vec{x}=\{x_1,x_2,\ldots\}$ and $\vec{y}=\{y_1,y_2,\ldots\}$,
 \be
 Q_{solution}
 =H^{(ZZM)}(\vec{x},\vec{y})\,.
 \label{normacko}
 \ee}
Under the auxiliary {\it ad hoc\,} assumption that $c_j\neq 0$ at
all odd $j$ we may accept, {say, the following}
normalization {of the separate ket-vector columns of
$Q_{solution}$},
 \be
 x_2=x_4=\ldots = 1\,,\ \ \ \ y_1=y_3=\ldots = 1\,.
 \label{normbecko}
 \ee
Then, the {closed-form} solution of our
infinite-dimensional matrix Schr\"{o}dinger equation is given by
formulae
 \be
 x_j=(a_j-a_{j+1})/c_j\,,\ \ \ \ \ j = {\rm odd}
 \ee
and
 \be
 y_k=-\frac{(a_{k+1}-a_{k+2})c_k}{(a_{k}-a_{k+1})c_{k+1}}\,,\ \ \ \ \ k = {\rm
 even}\,.
 \ee
\end{lemma}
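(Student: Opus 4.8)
The plan is to work directly with the structure of the ZZM Hamiltonian $H=H^{(ZZM)}(\vec a,\vec c)$ and verify, column by column, that the ansatz $Q_{solution}=H^{(ZZM)}(\vec x,\vec y)$ actually solves $H\,|\psi_n\kt=a_n\,|\psi_n\kt$. First I would read off from the defining pattern (\ref{defzz}) that $H$ acts on a column vector almost diagonally: row $j$ with $j$ odd contributes only the term $a_j$ times the $j$-th component, while row $j$ with $j$ even couples the three components $j-1,j,j+1$ via $c_{j-1},a_j,c_j$. Consequently, for the $n$-th eigenvector $|\psi_n\kt$ with eigenvalue $a_n$, the equations split into: (i) \emph{odd-row equations} $a_j\,(\psi_n)_j=a_n\,(\psi_n)_j$, forcing $(\psi_n)_j=0$ whenever $j$ is odd and $a_j\neq a_n$; and (ii) \emph{even-row equations} $c_{j-1}(\psi_n)_{j-1}+a_j(\psi_n)_j+c_j(\psi_n)_{j+1}=a_n(\psi_n)_j$, which are the only place where the off-diagonal parameters enter.

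Next I would split into the two generic cases dictated by the parity of $n$. If $n$ is even, the eigenvalue is $a_n$ and we expect $|\psi_n\kt$ to be the $n$-th ZZM column, i.e. supported on rows $n-1,n,n+1$ with entries $c_{n-1}=x_{n-1}$, $a_n$-entry $=1$ (the normalization $x_2=x_4=\dots=1$), and $c_n$-entry $=y_n$; one then checks the single nontrivial even-row equation (the $n$-th row) reduces to $x_{n-1}c_{n-1}+a_n+y_n c_n=a_n$ — wait, more carefully, to $c_{n-1}x_{n-1}+a_n\cdot 1+c_n y_n=a_n$, which pins down a relation among $x_{n-1},y_n$ — together with the $(n+2)$-th even-row equation $c_n y_n+a_{n+1}\cdot(\text{stuff})=\dots$, yielding the stated formula $y_k=-\,(a_{k+1}-a_{k+2})c_k/[(a_k-a_{k+1})c_{k+1}]$ after substituting the already-derived odd-index formula $x_j=(a_j-a_{j+1})/c_j$. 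If $n$ is odd, the eigenvalue $a_n$ appears on an odd row, $(\psi_n)_n$ is the free component (normalized by $y_1=y_3=\dots=1$, so that the relevant entry is a $c$-slot), and the neighbouring even-row equations at rows $n-1$ and $n+1$ must be satisfied; this is exactly where the coefficient $x_n=(a_n-a_{n+1})/c_n$ is forced, and one checks consistency of the remaining even row. Throughout, I would keep track of the indexing convention that the $j$-th $c$-parameter sits in row $j+1$ or in the ``return'' row depending on the parity of $j$, which is the bookkeeping heart of the argument.

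The main obstacle I expect is purely combinatorial-notational rather than conceptual: getting the parity bookkeeping exactly right so that ``the $n$-th eigenvector is literally the $n$-th column of $H^{(ZZM)}(\vec x,\vec y)$'' holds on the nose, including the boundary column $n=1$ (where there is no $c_0$ term and the first row equation $a_1(\psi_1)_1=a_1(\psi_1)_1$ is trivially satisfied) and the way the pattern of zero/nonzero entries in (\ref{defzz}) dovetails with the pattern of zero/nonzero entries forced by the eigenvalue equation. A secondary technical point is the role of the hypothesis $c_j\neq 0$ at odd $j$: it is precisely what makes the normalization (\ref{normbecko}) admissible (the would-be unit entry sits in a genuinely occupied slot) and what makes $x_j=(a_j-a_{j+1})/c_j$ and the denominator $(a_k-a_{k+1})c_{k+1}$ in $y_k$ well-defined; I would flag where each nonvanishing assumption is used. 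Since $M=\infty$ there are no end effects at the bottom of the matrix to worry about, so once the generic even-$n$ and odd-$n$ cases and the single boundary case $n=1$ are checked, the proof is complete by inspection of the two displayed formulae against the even-row equations.
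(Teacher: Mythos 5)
Your overall strategy --- substitute the ansatz $Q_{solution}=H^{(ZZM)}(\vec x,\vec y)$ into $H\,|\psi_n\kt=a_n|\psi_n\kt$ and verify it row by row, letting the odd rows of $H$ force vanishing components and the even rows determine the surviving ones --- is exactly the verification the paper has in mind (its own proof is a one-line ``insert and check''). However, your execution contains a genuine parity error that would make the check fail as written. In $H^{(ZZM)}(\vec x,\vec y)$ it is the \emph{odd} columns $n\ge 3$ that carry three nonzero entries ($y_{n-1}$ in row $n-1$, $x_n$ in row $n$, $y_n$ in row $n+1$), while every \emph{even} column contains the single entry $x_n$ in row $n$: you have transposed the row pattern (even rows have three entries) into the column pattern. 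Your treatment of even $n$ --- an eigenvector supported on rows $n-1,n,n+1$, with relations extracted from the $n$-th and $(n+2)$-nd row equations --- directly contradicts your own step (i): rows $n\pm1$ are odd, so their equations force $(\psi_n)_{n\pm1}=0$, and then the even-row equations force all remaining components except the $n$-th to vanish as well. The correct statement for even $n$ is simply $|\psi_n\kt\propto e_n$ (normalized by $x_n=1$), with nothing left to derive; this is consistent with the single-entry even columns of the ZZM ansatz.

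Consequently the formula for $y_k$ with $k$ even cannot come from the even-$n$ case; it is produced by the \emph{odd}-$n$ eigenvector with $n=k+1$. For odd $n\ge3$ the two even rows $n\pm1$ give $c_n(\psi_n)_n=(a_n-a_{n+1})(\psi_n)_{n+1}$ and $c_{n-1}(\psi_n)_n=(a_n-a_{n-1})(\psi_n)_{n-1}$; with the normalization $(\psi_n)_{n+1}=y_n=1$ the first yields $x_n=(a_n-a_{n+1})/c_n$, and the second then yields $y_{n-1}=-c_{n-1}x_n/(a_{n-1}-a_n)$, which is precisely the stated expression for $y_k$ at $k=n-1$ after substituting $x_{k+1}=(a_{k+1}-a_{k+2})/c_{k+1}$. (The boundary case $n=1$ works as you say, using only row $2$.) So the idea is right, and the hypotheses $c_j\neq0$ for odd $j$ and the $a_j$ pairwise distinct enter where you indicate, but the even/odd bookkeeping must be repaired --- in particular the roles you assign to the two parities must be swapped --- before the verification actually closes.
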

\begin{proof}
Proof {is based on the auxiliary lemmas of Appendix A
reflecting the remarkable properties of the algebra of zig-zag
matrices. The formulae themselves} follow directly from the
insertion of the solution in Schr\"{o}dinger equation.
\end{proof}

In this Lemma
our assumption $M=\infty$ enabled us to avoid the
discussion of the role of the truncation
of the matrix at $M < \infty$. In the latter case,
fortunately, it proves sufficient to
set, formally, $a_{M+1}=a_{M+2}=\ldots =0$
and $c_M=c_{M+1}=\ldots =0$.
Also
the apparent $c_j \to 0$
singularities at $j<M$
are just an artifact of our normalization
(\ref{normbecko}). Whenever needed, these singularities may be
removed easily because
our choice of the normalization has been
dictated by the simplicity of the proof rather than by the
simplicity or optimality of the formulae (\ref{normacko}) and (\ref{normbecko}).
The amendment is offered by the following re-normalized and more compact
result.

\begin{lemma}
\label{lemmaackobe} In Schr\"{o}dinger Eq.~(\ref{SE}) with the ZZM
Hamiltonian $H=H^{(ZZM)}(\vec{a},\vec{c})$, the column-vector
eigenstates corresponding to the bound-state
energies $E=a_n$ with $n=1,2,\ldots,M$ can be given a differently
normalized {``tilded''} form
 \be
 \{
 |\widetilde{\psi_1^{}}\kt,|\widetilde{\psi_2^{}}\kt,
 \ldots,|\widetilde{\psi_M^{}}\kt
 \}
 =H^{(ZZM)}(\vec{p},\vec{q})
 \label{normackocis}
 \ee
where we {employ a different,} unit-diagonal
normalization $p_j=1$ at all $j$, and where we obtain the more
compact {formula for the off-diagonal parameters
forming the vector $\vec{q}$,}
 \be
 q_k=-c_k/(a_k-a_{k+1})\,,\ \ \ \ \ k = 1,2,\ldots,M-1\,.
 \ee
\end{lemma}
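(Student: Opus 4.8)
The plan is to obtain Lemma~\ref{lemmaackobe} directly from Lemma~\ref{lemmaacko} by a column-by-column rescaling, supplemented by the truncation rule already stated above for $M<\infty$. Since, by Lemma~\ref{lemmajedna}, each energy $E=a_n$ is (for generic parameters) a simple eigenvalue of $H^{(ZZM)}(\vec a,\vec c)$, the corresponding eigenvector $|\psi_n^{}\kt$ is determined only up to a nonzero scalar multiple; Lemma~\ref{lemmaacko} fixes one representative, and the ``tilded'' normalization $p_j\equiv 1$ merely selects another representative of the same ray for each $n$.

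First I would read off, from the zig-zag pattern (\ref{defzz}), the nonzero entries of each column of $Q_{solution}=H^{(ZZM)}(\vec x,\vec y)$: for odd $n\ge 3$ the $n$-th column carries $y_{n-1}$ in row $n-1$, the diagonal entry $x_n$ in row $n$, and $y_n=1$ in row $n+1$ (for $n=1$ the super-diagonal neighbour is absent), whereas for even $n$ it degenerates to the single diagonal entry $x_n=1$, i.e. $|\psi_n^{}\kt$ is the $n$-th canonical basis vector. I would then divide each column by its own diagonal entry: the even columns are already unit-diagonal and stay put, while each odd column gets multiplied by $1/x_n$, which turns its diagonal entry into $p_n=1$ and its two neighbours into $q_{n-1}=y_{n-1}/x_n$ and $q_n=1/x_n$. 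Inserting the closed forms of $x_j$ (odd $j$) and $y_k$ (even $k$) from Lemma~\ref{lemmaacko} and cancelling the common $c$- and $(a-a)$-factors in the resulting compound fractions then collapses these entries to the single compact expression claimed in the Lemma. Equivalently, and perhaps more transparently, one may bypass Lemma~\ref{lemmaacko} altogether and verify the claim by direct substitution into Schr\"odinger Eq.~(\ref{SE}): using the explicit action of $H^{(ZZM)}(\vec a,\vec c)$ on the basis vectors $e_{n-1},e_n,e_{n+1}$ recorded in Appendix~A, matching the $e_n$-component of $H^{(ZZM)}(\vec a,\vec c)\,|\widetilde{\psi_n^{}}\kt=a_n\,|\widetilde{\psi_n^{}}\kt$ imposes no constraint (so $p_n$ is free and is fixed to $1$), while matching the two neighbouring components pins down $q_{n-1}$ and $q_n$ to the stated values.

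Finally, for $M<\infty$ I would close the argument by imposing the conventions $a_{M+1}=a_{M+2}=\cdots=0$ and $c_M=c_{M+1}=\cdots=0$, under which the formulas persist verbatim for $k=1,\ldots,M-1$ and the last column is treated consistently (for $M$ even it is the trivial unit-vector eigenstate; for $M$ odd it is the truncated two-entry column, which still satisfies the eigenvalue equation with the very same $q_{M-1}$). I would also point out the practical payoff already advertised: because this normalization places the $c_k$ in the numerator rather than the denominator, the spurious $c_j\to 0$ poles of Lemma~\ref{lemmaacko} disappear, leaving only the genuine singularities at coinciding energies $a_k=a_{k+1}$, where the eigenvalue ceases to be simple. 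I expect the only real obstacle to be the parity bookkeeping --- keeping straight that odd and even rows of the zig-zag pattern play different roles, that within column $n$ the slot $q_{n-1}$ sits above and $q_n$ below the diagonal, and that the boundary column at $M<\infty$ must be checked separately; none of this is deep, but it is exactly where a sign or index slip would most easily occur.
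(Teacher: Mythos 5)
Your overall route is exactly the paper's: the published proof consists of the single observation that $H^{(ZZM)}(\vec x,\vec y)\,\varrho=H^{(ZZM)}(\vec p,\vec q)$ for a suitable diagonal matrix $\varrho$, which is precisely the column-by-column rescaling you describe. Your bookkeeping of the zig-zag pattern is also correct: in an odd column $n$ the slots are $y_{n-1}$ (row $n-1$), $x_n$ (row $n$) and $y_n=1$ (row $n+1$), the even columns are canonical basis vectors and need no rescaling, and the truncation conventions for $M<\infty$ are handled as the paper intends.

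The problem is the one step you assert but do not carry out, namely that ``cancelling the common factors collapses these entries to the single compact expression claimed in the Lemma.'' It does not, for odd $k$. Your own intermediate formulas, combined with Lemma~\ref{lemmaacko}, give for odd $n$
\[
q_{n-1}=\frac{y_{n-1}}{x_n}
=-\frac{(a_{n}-a_{n+1})\,c_{n-1}}{(a_{n-1}-a_{n})\,c_{n}}\cdot\frac{c_n}{a_n-a_{n+1}}
=-\frac{c_{n-1}}{a_{n-1}-a_{n}}\,,
\qquad
q_{n}=\frac{1}{x_n}=+\frac{c_n}{a_n-a_{n+1}}\,,
\]
so the even-indexed $q_{n-1}$ matches the claimed $-c_k/(a_k-a_{k+1})$ but the odd-indexed $q_n$ comes out with the \emph{opposite} sign. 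Your ``direct substitution'' alternative gives the same verdict: the row-$(n+1)$ component of $H\,|\widetilde{\psi_n}\kt=a_n|\widetilde{\psi_n}\kt$ reads $c_n+a_{n+1}q_n=a_nq_n$, forcing $q_n=c_n/(a_n-a_{n+1})$. The underlying reason is that the natural closed form of the off-diagonal entry in row $m$ of the eigenvector for $E=a_n$ is $H_{mn}/(a_n-a_m)$, with the eigenvalue always first in the denominator; writing this uniformly as $-c_k/(a_k-a_{k+1})$ is only consistent when the eigenvalue carries the larger index, i.e.\ for even $k$. A correct write-up must therefore either record the parity-dependent sign explicitly or flag the discrepancy with the statement as given; as it stands, the asserted cancellation conceals exactly the sign slip you warn about in your last sentence.
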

\begin{proof}
{As long as we just changed the normalization
convention, there exists a diagonal matrix (say, $\varrho$) such
that $H^{(ZZM)}(\vec{x},\vec{y})\,\varrho =
H^{(ZZM)}(\vec{p},\vec{q})$}.
\end{proof}

\section{Closed-form construction of all of the eligible metrics}

It is well known \cite{SIGMAdva} that whenever we replace the
manifestly non-Hermitian Hamiltonian $H$ in Schr\"{o}dinger
Eq.~(\ref{SE}) by its conjugate $H^\dagger$, the knowledge of the
``ketket'' solutions of the associated Schr\"{o}dinger equation
 \be
 H^\dagger\,|\psi_n^{}\kkt=E_n^{}\,|\psi_n^{}\kkt\,,
 \ \ \ \ n=1,2,\ldots,M
 \label{SEer}
 \ee
enables us to define all of the admissible metrics
$\Theta=\Theta(H)$ (i.e., all of the admissible solutions of
Eq.~(\ref{zasu})) by formula
 \be
 \Theta=\Theta(\kappa^2_1\,,\kappa^2_2\,, \ldots,\kappa^2_M)
 =\sum_{n=1}^M\,|\psi_n^{}\kkt\,\kappa^2_n\,\bbr \psi_n^{}|\,.
 \label{odkap}
 \ee
This is {\em not\,} a spectral representation of $\Theta$ because in
general (i.e., due to the non-Hermiticity of $H$) the overlaps $\bbr
\psi_m^{}| \psi_n^{}\kkt$ need not vanish even when $m \neq n$.
Still, this formula shows that the general metric can vary with as
many as $M$ freely variable real and positive parameter
$\kappa^2_n$.

In comparison with Eq.~(\ref{SE}),
the most important comment
concerning Eq.~(\ref{SEer}) is that as long as our toy-model ZZM
Hamiltonians $H$ are real, we now have to deal with the transposed
matrices,
 \be
 H^\dagger=H^T=H^{(TZZM)}(\vec{a},\vec{c})=
 \left[ \begin {array}{cccccc}
     {\it a_1}&{\it c_1}&0&0&\ldots&
 \\\noalign{\medskip}0&{\it a_2}&0&0&\ldots&
 \\\noalign{\medskip}0&{\it c_2}&{\it a_3}&{\it c_3}&0&\ldots
 \\\noalign{\medskip}0&0&0&{\it a_4}&0&\ddots
 \\\noalign{\medskip}\vdots&\vdots&0&{\it c_4}&{\it a_5}&\ddots
 \\\noalign{\medskip}&&\vdots&\ddots&\ddots&\ddots
 \end {array} \right]\,.
 \label{Tdefzz}
 \ee
The crucial consequence is that it is sufficient to replace the
ZZM theory of Appendix A by its transposed-matrix TZZM alternative.

\begin{lemma}
\label{lemmaTackobe} In Schr\"{o}dinger Eq.~(\ref{SEer}) with the
transposed Hamiltonian $H^T=H^{(TZZM)}(\vec{a},\vec{c})$, the
collection of the column-vector eigenstates $|\psi_n^{}\kkt$
corresponding to the bound-state energies $E=a_n$ with
$n=1,2,\ldots,M$ can be given the TZZM form,
 \be
 \{
 |\psi_1^{}\kkt,|\psi_2^{}\kkt,\ldots,|\psi_M^{}\kkt
 \}
 =H^{(TZZM)}(\vec{p},\vec{q})\,.
 \label{normTackocis}
 \ee
The normalization $p_j=1$ (at all $j$) leads to the
closed-form result
 \be
 q_k=-c_k/(a_k-a_{k+1})\,,\ \ \ \ \ k = 1,2,\ldots,M-1\,.
 \label{defel}
 \ee
\end{lemma}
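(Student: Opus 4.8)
The plan is to transport the derivation of Lemma~\ref{lemmaackobe} across the transpose. The key remark is that $H^{(TZZM)}(\vec a,\vec c)=\bigl[H^{(ZZM)}(\vec a,\vec c)\bigr]^{T}$, so every structural fact about zig-zag matrices assembled in Appendix~A holds verbatim for their transposes after rows and columns are swapped: the set of all TZZM matrices is again an associative (noncommutative) matrix algebra closed under multiplication, the product $H^{(TZZM)}(\vec a,\vec c)\,H^{(TZZM)}(\vec p,\vec q)$ being again of TZZM form, and right multiplication of a TZZM matrix by a diagonal matrix stays in the class while merely rescaling its columns. The eigenvalues of $H^{T}$ are still $a_1,\ldots,a_M$, exactly as in Lemma~\ref{lemmajedna}, since $\det(H^{T}-\lambda)=\det(H-\lambda)$.

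First I would fold the whole family (\ref{SEer}) into a single matrix equation: with $Q:=\{\,|\psi_1^{}\kkt,\ldots,|\psi_M^{}\kkt\,\}$ and $D:=\mathrm{diag}(a_1,\ldots,a_M)$, the assertion that the $n$th column of $Q$ is the $a_n$-eigenvector of $H^{T}$ is precisely $H^{(TZZM)}(\vec a,\vec c)\,Q=Q\,D$. Substituting the ansatz $Q=H^{(TZZM)}(\vec p,\vec q)$ and using the algebra above, both sides of $H^{(TZZM)}(\vec a,\vec c)\,H^{(TZZM)}(\vec p,\vec q)=H^{(TZZM)}(\vec p,\vec q)\,D$ are TZZM matrices, so the identity reduces to comparing their $2M-1$ defining parameters.

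Comparing the diagonal parameters yields only the empty identities $a_j p_j=p_j a_j$, i.e.\ each eigenvector column may be rescaled independently, which legitimises the normalisation $p_j=1$ for all $j$. Comparing the off-diagonal (``zig-zag'') parameters yields, for each $k$, one scalar linear relation tying $q_k$ to $c_k$ and to a neighbouring diagonal entry --- coming from the single row of $H^{(TZZM)}(\vec a,\vec c)$ in which both $c_k$ and the matching zig-zag slot sit --- and solving it under $p_j=1$ produces (\ref{defel}). Equivalently one may verify the ansatz column by column: with $p_j=1$ the odd-indexed columns of $H^{(TZZM)}(\vec p,\vec q)$ are the standard basis vectors $e_j$, manifestly $a_j$-eigenvectors of $H^{(TZZM)}(\vec a,\vec c)$ since $H^{(TZZM)}(\vec a,\vec c)\,e_j$ is its own $j$th column, whereas each even-indexed column consists of a unit diagonal entry flanked in consecutive rows by two successive zig-zag parameters, so its eigenvalue equation reduces to exactly two nontrivial scalar equations --- one for each parameter --- all other rows being satisfied automatically by the band pattern of (\ref{Tdefzz}). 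For finite $M$ the trailing parameters are set to zero and the spurious singularities at the normalisation points are removed as in the discussion following Lemma~\ref{lemmaacko}.

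The one step I expect to require genuine care is the self-consistency of the sparse ansatz itself --- one must check that the eigenvectors of $H^{T}$ really do have the TZZM support pattern and do not leak into additional components. This is immediate from (\ref{Tdefzz}), which splits $H^{T}$ into overlapping $1\times1$ and $3\times3$ diagonal blocks in exactly the pattern that confines each $a_n$-eigenvector to its own block; once this is granted, the remainder is a routine linear-algebra computation reproducing (\ref{defel}). It is worth noting that the resulting formula is literally the same as in Lemma~\ref{lemmaackobe}: passing to the transpose has left the off-diagonal parameters unchanged.
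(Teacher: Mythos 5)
Your strategy is sound and is, as far as one can tell from the paper's one‑line proof (``a TZZM analogue of the ZZM proof of Lemma~\ref{lemmaackobe}''), exactly the intended one: fold the family (\ref{SEer}) into $H^{(TZZM)}(\vec a,\vec c)\,Q=Q\,D$, insert the sparse ansatz $Q=H^{(TZZM)}(\vec 1,\vec q)$, invoke the closure of the (T)ZZM class under multiplication (the transpose of Lemma~\ref{lemmadve}) to reduce everything to $M-1$ scalar conditions, and note that the odd columns are trivially the basis eigenvectors $e_j$ while each even column contributes exactly two nontrivial rows. All of that is correct, and your block‑support argument for why the ansatz does not ``leak'' is fine.

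The gap is that you never actually solve the two scalar equations you correctly isolate, and they do not produce the uniform formula (\ref{defel}) you assert. For the even column $2m$ of $Q$, i.e.\ $q_{2m-1}e_{2m-1}+e_{2m}+q_{2m}e_{2m+1}$, rows $2m-1$ and $2m+1$ of the eigenvalue equation read
\begin{equation}
a_{2m-1}\,q_{2m-1}+c_{2m-1}=a_{2m}\,q_{2m-1}\,,\qquad
c_{2m}+a_{2m+1}\,q_{2m}=a_{2m}\,q_{2m}\,,
\end{equation}
whence $q_{2m-1}=-c_{2m-1}/(a_{2m-1}-a_{2m})$ but $q_{2m}=+c_{2m}/(a_{2m}-a_{2m+1})$: the sign alternates with the parity of $k$ instead of being uniformly negative. (A $3\times3$ check with $\vec a=(1,2,3)$ gives the $a_2$‑eigenvector $(c_1,1,-c_2)^T$, i.e.\ $q_2=-c_2\neq -c_2/(a_2-a_3)=+c_2$.) Your closing claim that ``passing to the transpose has left the off‑diagonal parameters unchanged'' is also not what the computation yields: repeating it for the ZZM case gives the opposite parity pattern ($+$ for odd $k$, $-$ for even $k$), so the ZZM and TZZM vectors $\vec q$ are negatives of one another, not equal. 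One can cross‑check against Lemma~\ref{lemmaacko}, which is verifiably correct: renormalizing its columns to unit diagonal gives $q_3=1/x_3=+c_3/(a_3-a_4)$, again the alternating sign. The paper's own uniform‑sign statement therefore appears to carry the same slip; but a proof must derive the formula it asserts, and the one ``routine'' step you defer is precisely the step that contradicts it. Everything else in your argument survives once (\ref{defel}) is corrected to $q_k=(-1)^{k}c_k/(a_k-a_{k+1})$.
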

\begin{proof}
Proof is a TZZM analogue of the ZZM proof of Lemma
\ref{lemmaackobe}.
\end{proof}

Formula (\ref{normTackocis}) containing $M-1$
characteristics (\ref{defel}) of the Hamiltonian may be inserted in
the definition of all of the eligible metrics (\ref{odkap}). The
resulting $M$ by $M$ matrices $\Theta$ would be, by construction,
invertible, Hermitian and positive definite.
Due to the reality and tridiagonality of the factor
(\ref{normTackocis}) and of its transposition, all of the metrics
will have a real and symmetric pentadiagonal-matrix form.
The explicit evaluation of their matrix elements is
straightforward
and constitutes our present main mathematical result.

 %\newpage

\begin{thm}
\label{theorema} Every metric $\Theta$ guaranteeing the
quasi-Hermiticity (\ref{zasu}) of our $(2M-1)-$parametric ZZM
Hamiltonian (\ref{defzz}) can be given the three-component form
 \be
 \Theta= \Theta^{(diag)}+ \Theta^{(tridiag)}+
 \Theta^{(pentadiag)}\,.
 \label{Tdefzz}
 \ee
Its first component is just the invertible, $q_j-$independent and
positive-definite diagonal matrix,
 \be
  \Theta^{(diag)}=H^{(TZZM)}(\vec{\kappa^2},\vec{0})\,.
 \ee
The second component has the sparse tridiagonal-matrix form with
vanishing main diagonal,
 \be
  \Theta^{(tridiag)}_{nn}=0\,,\ \ \ \ n=1,2,\ldots,M\,.
 \ee
Its off-diagonal elements
 \be
  \Theta^{(tridiag)}_{m,m+1}=\Theta^{(tridiag)}_{m+1,m}=q_m\kappa^2_{m+1}\,,
  \ \ \
  m=1,3,\ldots\ (\leq M-1)
  \ee
and
 \be
  \Theta^{(tridiag)}_{n,n+1}=\Theta^{(tridiag)}_{n+1,n}=q_n \kappa^2_{n}
 \,,\ \ \ \ n=2,4,\ldots\ (\leq M-1)\,
 \ee
are all linear in $q_j$s. The remaining, third component of the
metric has the pentadiagonal sparse-matrix form
 \be
  \Theta^{(pentadiag)}=
 \left[ \begin {array}{cccccccc}
  {{\it q_1}}^{2}{\it \kappa^2_2}&0&{\it q_1}\,{\it \kappa^2_2}\,{\it q_2}&0&\ldots&&&
  \\\noalign{\medskip}0&0&0&0&0&\ldots&&
 \\\noalign{\medskip}{\it q_1}\,{ \it \kappa^2_2}\,{\it q_2}&0&{{\it q_2}}^{2}{\it \kappa^2_2}+{{\it
 q_3}} ^{2}{\it \kappa^2_4}&0&{\it q_3}\,{\it \kappa^2_4}\,{\it
 q_4}&0&\ldots&
 \\\noalign{\medskip}0&0&0&0&0&0&0&\ldots
 \\\noalign{\medskip}\vdots&0&{\it q_3}\,{\it \kappa^2_4}\,{\it q_4}&
 &{{\it q_4}}^{2}{\it \kappa^2_4}+{{\it q_5}}^{2}{\it \kappa^2_6}
 &0&{ \it q_5}\,{\it \kappa^2_6}\,{\it q_6}&\ddots
 \\\noalign{\medskip}&\vdots&0&0&0&0&0&\ddots
 \\\noalign{\medskip}&&\vdots&0&{\it
 q_5}\,{\it \kappa^2_6} \,{\it q_6}&0&{{\it q_6}}^{2}{\it
 \kappa^2_6}+{{\it q_7}}^{2}{\it
 \kappa^2_8}&\ddots
 \\\noalign{\medskip}&&&\vdots&\ddots&\ddots&\ddots&\ddots
 \end {array}
 \right]\,
 \ee
with elements which are all quadratic in $q_j$s.
\end{thm}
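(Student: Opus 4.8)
The plan is to collapse the whole family of candidate metrics (\ref{odkap}) into a single matrix identity and then to expand it. By Lemma~\ref{lemmaTackobe}, stacking the ketket eigenvectors $|\psi_n\kkt$ of $H^\dagger=H^T$ as the columns of one matrix yields $W:=H^{(TZZM)}(\vec p,\vec q)$ with $p_j\equiv 1$ and $q_k=-c_k/(a_k-a_{k+1})$; since $H$, and hence $H^T$, is real, each $|\psi_n\kkt$ is real. Writing $D:=H^{(TZZM)}(\vec{\kappa^2},\vec 0)$ for the positive diagonal matrix carrying $\kappa^2_1,\ldots,\kappa^2_M$, formula (\ref{odkap}) becomes
\be
 \Theta=W\,D\,W^T\,,
\ee
so the theorem reduces to evaluating the entries $\Theta_{ij}=\sum_k \kappa^2_k\,W_{ik}W_{jk}$ and then sorting the outcome by its degree in the $q$'s.

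The second step is to exploit the extreme sparsity of $W$. Reading (\ref{Tdefzz}) with $a_j\to 1$, $c_j\to q_j$, one sees that every row of $W$ carries $W_{jj}=1$, each odd row $i$ additionally carries $W_{i,i-1}=q_{i-1}$ (absent when $i=1$) and $W_{i,i+1}=q_i$, and even rows carry nothing else. Hence in $\Theta_{ij}=\sum_k\kappa^2_k W_{ik}W_{jk}$ the index $k$ runs only over the intersection of $\{i-1,i,i+1\}$ with $\{j-1,j,j+1\}$, which is empty unless $|i-j|\le 2$; so $\Theta$ is automatically pentadiagonal. A short case analysis by the parities of $i,j$ then settles every entry: for $i,j$ both even only $k=i=j$ survives, giving $\kappa^2_i\delta_{ij}$; for one of them even and the other an odd neighbour one gets a single cross term, equal to $q_m\kappa^2_{m+1}$ for odd $m$ and to $q_n\kappa^2_n$ for even $n$; for $i=j$ both odd one gets $\kappa^2_i+q_{i-1}^2\kappa^2_{i-1}+q_i^2\kappa^2_{i+1}$; and for $j=i+2$ both odd only $k=i+1$ survives, contributing $q_i\,\kappa^2_{i+1}\,q_{i+1}$.

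The third step is just to regroup these contributions by their $q$-degree. The $q$-free pieces, i.e.\ the $\kappa^2_j$ on the diagonal, reassemble into $\Theta^{(diag)}=D=H^{(TZZM)}(\vec{\kappa^2},\vec 0)$. The pieces linear in a single $q$, which by the parity count sit only on the first off-diagonals and leave the main diagonal zero, reassemble into $\Theta^{(tridiag)}$ with exactly the stated entries. The pieces quadratic in the $q$'s — the $q_{i-1}^2\kappa^2_{i-1}+q_i^2\kappa^2_{i+1}$ on the odd diagonal slots and the $q_i\kappa^2_{i+1}q_{i+1}$ two steps off the diagonal — reassemble into the displayed $\Theta^{(pentadiag)}$. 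Invertibility, Hermiticity and positive-definiteness are then immediate from $\Theta=WDW^T$: $D$ is positive, and $W$ is invertible because its columns are eigenvectors of $H^T$ belonging to the (assumed distinct) bound-state energies $a_1,\ldots,a_M$; alternatively one may simply invoke the general construction of \cite{SIGMAdva}.

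I expect the interior computation to be completely routine once the sparsity pattern of $W$ is pinned down, so the only genuinely delicate point is the boundary bookkeeping: the anomalous first row of $W$, and — in the truncated case $M<\infty$ — checking that the convention $a_{M+1}=a_{M+2}=\cdots=0$, $c_M=c_{M+1}=\cdots=0$ is consistent and that the last diagonal and off-diagonal slots of all three components $\Theta^{(diag)}$, $\Theta^{(tridiag)}$, $\Theta^{(pentadiag)}$ terminate correctly for both parities of $M$.
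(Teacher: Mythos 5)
Your proposal is correct and follows essentially the same route as the paper, whose proof is simply the remark that the theorem ``follows directly from formulae (\ref{odkap}) and (\ref{normTackocis})'': you merely make this explicit by writing (\ref{odkap}) as $\Theta=W\,D\,W^{T}$ with $W=H^{(TZZM)}(\vec{p},\vec{q})$ and carrying out the parity-sorted entrywise expansion. The extra observations (automatic pentadiagonality from the sparsity pattern, positivity from the factorized form, and the boundary bookkeeping at $i=1$ and at $M<\infty$) are consistent with, and a useful elaboration of, the paper's one-line argument.
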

\begin{proof}
The result follows directly from formulae (\ref{odkap}) and
(\ref{normTackocis}).
\end{proof}

In the context of physics the latter result is truly
remarkable because it implies that it really does make sense to work
with the present non-Hermitian ZZM or TZZM representations $H$ of
the Hamiltonians. For the simplest dynamical scenarios with
the finite and not too large numbers of the bound-state levels
$M<\infty$ at least, one could feel tempted to employ the standard
algorithms of linear algebra and to factorize the
pentadiagonal-matrix metric $\Theta=\Omega^\dagger\Omega$ of
Theorem \ref{theorema}. In principle, this would yield the
explicit Dyson map $\Omega$ and, finally, enable us to return to the
quantum mechanics of textbooks in which the conventional
self-adjoint representation $\mathfrak{h}$ of the Hamiltonian would
be ``easily'' reconstructed via Eq.~(\ref{oposm}). Nevertheless,
{a {\em feasible} realization of such an alternative,
more traditional version
of the present models would require an invention of new methods.}

{Indeed, the first mathematical
obstacle would emerge when we imagine that the
metrics $\Theta$ of
Eq.~(\ref{odkap}) and of
Theorem \ref{theorema} are ambiguous, $M-$parametric \cite{Geyer}.
Secondly, we would have to
deduce, from factorization $\Theta=\Omega^\dagger\,\Omega$,
a suitable sample of the Dyson map $\Omega$.
Then, indeed, a new set of free parameters forming a
unitary matrix ${\cal U}$ would have to be introduced and
considered here
due to the ambiguity of the factorization of the metric itself,
$\Theta=\Omega^\dagger\,\Omega=\Omega^\dagger\,{\cal U}^\dagger\,{\cal U}\,\Omega$.
Thus, certainly,}
the ``conventional''
Hermitian matrix $\mathfrak{h}$ would be a non-sparse,
user-unfriendly matrix {in general}.
Hence, the non-Hermitian matrix $H$ {really seems to offer}
the most economical representation of the Hamiltonian. One could
hardly find reasons for a tedious reconstruction of its
partner(s) $\mathfrak{h}$ of conventional textbooks.

\section{Conclusions}

In the dedicated literature, not too many quasi-Hermitian quantum
models have the ``exact and complete solvability'' property of our
present class of $M-$level bound-state systems using the real
zig-zag-matrix Hamiltonians (\ref{defzz}) with $2M-1$ free
parameters. Typically, the algebraically solvable models of such a
type are based on the use of tridiagonal matrix forms of $H$ (cf.,
e.g., a sample of such a class of quasi-Hermitian models in
\cite{procA}). In general, given a realistic non-Hermitian $H$, the
metric $\Theta$ assigned to the model is usually just approximate
and not too flexible, corresponding usually just to a fixed choice
of the set of parameters $\kappa_n^2$ in (\ref{odkap}).

In comparison, our present model is rather exceptional in keeping
the whole set of the metric-determining parameters
$\kappa_n^2$ freely variable. Moreover, our
restriction of the class of the Hamiltonians to the mere sparse
zig-zag matrices
of Appendix A proved
fortunate:
We discovered that the full sets of the eigenstates of
$H$ appeared to belong to the same
(viz., ZZM) subclass of the highly sparse zig-zag matrices.
One would even like to say ``serendipitiously fortunate''
because the same comment appeared to apply also to
the TZZM subclass and to the transposed Hamiltonian $H^\dagger$
playing a key role in the construction of the {\em complete\,}
set of the eligible metrics
$\Theta=\Theta(H)$.

In the context of physics one of the remarkable properties of the
model is that its bound-state energy spectrum coincides, due
to the ZZM sparsity of the Hamiltonian, with its main diagonal.
As an input information about dynamics
it can
be, therefore, fixed in
advance. This means that the remaining $M-1$ freely variable
off-diagonal matrix elements of $H$ can be interpreted as playing
an energy-complementing role of parameters responsible for the
operator metric $\Theta$, i.e., for
the correct
physical geometry
of the Hilbert space.
In this manner these parameters
influence, directly and implicitly, the selection and form of
the other possible
observable features of the system \cite{Geyer}.

A final complementary comment may be also added on the existence and
structure of the exactly solvable quantum models of unitary systems
occurring and widely used within the framework of the conventional
Hermitian quantum mechanics in which the metric is kept trivial,
$\Theta_{conventional}=I$. Indeed, once the operators of the
observables (including the Hamiltonians) become required,
in the conventional textbook spirit,
self-adjoint, the first nontrivial matrix form of an observable (or
of the
Hamiltonian) has to be real and symmetric, i.e., fully tridiagonal,
i.e., from the numerical-manipulation perspective, perceivably more
complicated than our ``maximally sparse'' ZZM models (\ref{defzz}).

\newpage

\newpage

\section*{Appendix A. Zig-zag matrices}

By ``zig-zag matrices'' (ZZM) we will understand, in this paper, the
real and tridiagonal $M$ by $M$ matrices of Eq.~(\ref{defzz}) which
may be finite- or infinite-dimensional (i.e., $M \leq \infty$).
Whenever needed, the transpositions of these matrices will be
called, for the sake of definiteness, the  ``transposed zig-zag
matrices'' (TZZM). Both of these classes of matrices have a few
truly remarkable properties.

\begin{lemma}
\label{lemmajedna} The spectrum of $H^{(ZZM)}(\vec{a},\vec{c})$
coincides with the $M-$plet of parameters $\vec{a}$.
\end{lemma}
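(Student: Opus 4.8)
The plan is to compute the characteristic polynomial of $H:=H^{(ZZM)}(\vec a,\vec c)$ directly; once it is shown to equal $\prod_{n=1}^{M}(a_n-\lambda)$ up to sign, the claim follows at once, multiplicities included. Everything rests on a single structural feature of the pattern in (\ref{defzz}): writing $e_1,e_2,\dots$ for the standard basis vectors, the even columns of $H$ carry nothing but the diagonal entry, so $H e_{2k}=a_{2k}\,e_{2k}$, whereas the odd columns give $H e_{2k-1}=c_{2k-2}\,e_{2k-2}+a_{2k-1}\,e_{2k-1}+c_{2k-1}\,e_{2k}$ (with the convention $c_0:=0$, and $c_M:=c_{M+1}:=\dots:=0$ at a finite truncation); that is, $H e_{2k-1}-a_{2k-1}e_{2k-1}$ lies in the subspace $V_{\mathrm{even}}:=\mathrm{span}\{e_2,e_4,e_6,\dots\}$.

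First I would re-order the basis as $(e_1,e_3,e_5,\dots\,;\,e_2,e_4,e_6,\dots)$. By the two observations just made, in this ordering $H$ acquires the block-triangular form
\be
\widehat H=\left[\begin{array}{cc} D_{\mathrm{odd}} & 0\\[1mm] B & D_{\mathrm{even}}\end{array}\right],\qquad
D_{\mathrm{odd}}=\mathrm{diag}(a_1,a_3,a_5,\dots),\qquad D_{\mathrm{even}}=\mathrm{diag}(a_2,a_4,a_6,\dots),
\ee
with some lower-left block $B$ whose entries play no role. Block-triangularity then gives $\det(H-\lambda I)=\det(D_{\mathrm{odd}}-\lambda I)\,\det(D_{\mathrm{even}}-\lambda I)=\prod_{n=1}^{M}(a_n-\lambda)$, so the spectrum is precisely $\{a_1,\dots,a_M\}$. (Equivalently, without any re-ordering: $H-\lambda I$ is tridiagonal, and in the standard three-term recursion $D_k=(a_k-\lambda)\,D_{k-1}-H_{k,k-1}\,H_{k-1,k}\,D_{k-2}$ for its leading principal minors the coupling term vanishes at every step, since in the zig-zag pattern one member of each transposed pair $\{H_{k,k-1},H_{k-1,k}\}$ of off-diagonal entries is always zero; hence $D_M=\prod_{k=1}^{M}(a_k-\lambda)$.)

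For $M<\infty$ there is essentially no obstacle: the argument is pure bookkeeping of the sparsity of (\ref{defzz}), and either route above closes in a couple of lines. The only point deserving a word of care is $M=\infty$, where ``spectrum'' must be read appropriately. There the even basis vectors $e_{2k}$ remain honest $\ell^2$-eigenvectors for the eigenvalues $a_{2k}$, the odd eigenvalues $a_{2k-1}$ are realized by the closed-form eigenvectors exhibited in Lemma \ref{lemmaacko}, and the same block-triangular structure shows that no further point spectrum can appear; here one tacitly assumes the data $\vec a,\vec c$ bounded, so that $H$ is a bounded operator, and, if one wants the full spectrum rather than just the eigenvalues, that $\{a_n\}$ carries no spurious accumulation point --- its closure being otherwise harmless.
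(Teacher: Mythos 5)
Your proof is correct and is essentially the paper's argument (the paper simply says ``recall the definition of the secular determinant''), worked out in full: you show $\det(H^{(ZZM)}-\lambda I)=\prod_{n}(a_n-\lambda)$, either by permuting to a block-lower-triangular form or via the tridiagonal three-term minor recursion in which the coupling product $H_{k,k-1}H_{k-1,k}$ always vanishes. Your added caveats for $M=\infty$ (boundedness of the data, closure of $\{a_n\}$) are a sensible refinement that the paper glosses over.
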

\begin{proof}
It is sufficient to recall the definition of the secular determinant
$\det (H^{(ZZM)}(\vec{a},\vec{c})-E)$.
\end{proof}

\begin{lemma}
\label{lemmadve} The standard matrix product of two ZZM factors
retains the ZZM property,
 \be
  H^{(ZZM)}(\vec{a},\vec{c}) \cdot
 H^{(ZZM)}(\vec{b},\vec{d})=H^{(ZZM)}(\vec{u},\vec{v})
 \ee
  with
 \be
 u_j=a_jb_j\,,\ \ \ \ j=1,2,\ldots,M
 \ee
 \be
 v_k=c_k b_k+ a_{k+1}d_k\,,\ \ \ \ k={\rm odd}\,,\ \ k \leq M-1
 \ee
 \be
 v_k=a_k d_k+c_k b_{k+1}\,,\ \ \ \ k={\rm even}\,,\ \ k \leq M-1\,.
 \ee
\end{lemma}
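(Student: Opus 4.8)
The plan is to prove the identity by a direct, parity-split evaluation of the matrix product $C:=H^{(ZZM)}(\vec{a},\vec{c})\cdot H^{(ZZM)}(\vec{b},\vec{d})$, exploiting the fact that a zig-zag matrix has its odd-indexed rows supported on the main diagonal alone while its even-indexed rows are tridiagonal. First I would record the support explicitly: for $A=H^{(ZZM)}(\vec{a},\vec{c})$ one has $A_{ii}=a_i$ for every $i$, and in addition $A_{i,i-1}=c_{i-1}$, $A_{i,i+1}=c_i$ for even $i$; all remaining entries vanish, and likewise for $B=H^{(ZZM)}(\vec{b},\vec{d})$ with $b,d$ in the roles of $a,c$.

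The computation of $C_{ij}=\sum_{\ell}A_{i\ell}B_{\ell j}$ then splits into two cases. If $i$ is odd, the only surviving term is $A_{ii}B_{ij}=a_i B_{ij}$, and since $i$ is odd the $i$-th row of $B$ is also diagonal, so $C_{ij}=a_i b_i\,\delta_{ij}$; this already shows that the odd rows of $C$ have the right shape and records $u_i=a_ib_i$. If $i$ is even, the sum reduces to $C_{ij}=c_{i-1}B_{i-1,j}+a_iB_{ij}+c_iB_{i+1,j}$, where the neighbours $i\mp1$ are odd, so rows $i-1$ and $i+1$ of $B$ contribute only through their diagonal entries $b_{i-1},b_{i+1}$, while row $i$ of $B$ contributes $d_{i-1}$, $b_i$, $d_i$ in columns $i-1,i,i+1$. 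Collecting coefficients gives $C_{i,i-1}=c_{i-1}b_{i-1}+a_id_{i-1}$, $C_{ii}=a_ib_i$, $C_{i,i+1}=a_id_i+c_ib_{i+1}$ and nothing else; matching $C_{i,i-1}$ against $v_k$ at the odd index $k=i-1$ and $C_{i,i+1}$ against $v_k$ at the even index $k=i$ yields exactly the asserted formulas and, along the way, the ZZM sparsity pattern of $C$.

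The argument is essentially bookkeeping, so the only point deserving care --- the ``main obstacle'', such as it is --- is the handling of parities together with the boundary at $i=M$ when $M<\infty$: one should check that the out-of-range terms (for instance $B_{M+1,\cdot}$ when $M$ is even) are precisely the ones omitted from the statement because $v_k$ is asserted only for $k\le M-1$, and that the rows $i=1$ and $i=M$ produce no anomaly. With the convention $a_{M+1}=c_{M}=\dots=0$ already used in the paper this is automatic, and the remaining verification is the routine multiplication sketched above.
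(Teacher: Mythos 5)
Your proof is correct and follows exactly the route the paper intends: the paper's own proof is just the one-line remark that the result ``is obtained directly from the definition of the standard matrix product,'' and your parity-split evaluation of $C_{ij}=\sum_\ell A_{i\ell}B_{\ell j}$ (odd rows diagonal, even rows tridiagonal) is the careful write-up of that computation, with the boundary/truncation issue at $i=M$ correctly noted. Nothing is missing.
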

\begin{proof}
Proof is obtained directly from the definition of the standard
matrix product.
\end{proof}

\begin{cor}
Any positive integer power of ZZM retains the ZZM property.
\end{cor}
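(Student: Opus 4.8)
The plan is to derive the Corollary as an immediate consequence of Lemma \ref{lemmadve} by induction on the exponent. Write $A=H^{(ZZM)}(\vec{a},\vec{c})$ for brevity. For the base case $n=1$ there is nothing to prove: $A^1=A$ is a zig-zag matrix by its very definition (\ref{defzz}). For the inductive step I would assume that $A^n=H^{(ZZM)}(\vec{u},\vec{v})$ for some parameter vectors $\vec{u},\vec{v}$, and then write $A^{n+1}=A\cdot A^n=H^{(ZZM)}(\vec{a},\vec{c})\cdot H^{(ZZM)}(\vec{u},\vec{v})$. Lemma \ref{lemmadve} applies verbatim to this product and guarantees that the result is again of the form $H^{(ZZM)}(\vec{u}\,',\vec{v}\,')$, with the new diagonal entries $u_j'=a_ju_j$ and the new off-diagonal entries given by the two parity-dependent formulae of that Lemma. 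This closes the induction.

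Equivalently, one may phrase the argument structurally: Lemma \ref{lemmadve} says precisely that the set of $M$ by $M$ zig-zag matrices is closed under the (associative) operation of matrix multiplication, hence it is a multiplicative sub-semigroup of the full matrix algebra; and any sub-semigroup containing an element $A$ contains every positive-integer power $A^n$. Either formulation yields the claim in one line, so no genuine obstacle arises.

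The only point that deserves a word of care is the case $M=\infty$, where one must be sure the matrix products are even defined. This is automatic here: a zig-zag matrix is in particular tridiagonal (banded), so each entry of a product of two such matrices is a finite sum of at most a few terms, and associativity of the multiplication therefore holds entrywise without any convergence subtlety. Hence the induction goes through uniformly for all $M\leq\infty$, and the explicit recursion for the parameter vectors of $A^n$ can, if desired, be read off by iterating the formulae of Lemma \ref{lemmadve}.
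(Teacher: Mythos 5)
Your proposal is correct and is exactly the argument the paper intends: the Corollary is stated as an immediate consequence of Lemma \ref{lemmadve} (closure of the ZZM class under matrix multiplication), and your induction on the exponent is the standard way to spell that out. The added remark about well-definedness of products in the $M=\infty$ case (banded matrices, finitely many terms per entry) is a sensible extra precaution but introduces no new idea beyond the paper's.
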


\begin{lemma}
The negative integer powers of $H^{(ZZM)}(\vec{a},\vec{c})$ retain
the ZZM property, provided only that the zero does not belong to the
spectrum, $a_j \neq 0$, $j=1,2,\ldots, M$.
\end{lemma}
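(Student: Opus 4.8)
The plan is to reduce the whole statement to the single case of the inverse, $n=-1$, and then bootstrap. First I would record that by Lemma \ref{lemmajedna} the hypothesis $a_j\neq 0$ for $j=1,\ldots,M$ is exactly the statement that $0\notin\mathrm{spec}\,H$, so that $H=H^{(ZZM)}(\vec{a},\vec{c})$ is invertible; when $M=\infty$ this is to be understood in the purely algebraic sense of a two-sided inverse within the class of sparse (here tridiagonal) matrices, every entry of every product being a finite sum.

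Next I would exhibit an explicit ZZM candidate for $H^{-1}$ and verify it by means of Lemma \ref{lemmadve}. Put $B=H^{(ZZM)}(\vec{b},\vec{d})$ with $b_j=1/a_j$ and $d_k=-c_k/(a_k a_{k+1})$, both well defined under the hypothesis. Feeding $H\cdot B$ into the product formula of Lemma \ref{lemmadve} gives diagonal entries $u_j=a_j b_j=1$, while the off-diagonal entries are $v_k=c_k b_k+a_{k+1}d_k$ for odd $k$ and $v_k=a_k d_k+c_k b_{k+1}$ for even $k$; substituting $\vec{b},\vec{d}$ makes each of these collapse to $0$, so $H\cdot B=H^{(ZZM)}(\vec{1},\vec{0})=I$. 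Applying the same lemma with the two factors interchanged, the very same cancellation yields $B\cdot H=I$. Hence $H^{-1}=B$ is itself a zig-zag matrix. Finally, for any positive integer $n$ one has $H^{-n}=(H^{-1})^n=B^n$, and since $B$ is ZZM the Corollary above (any positive integer power of a ZZM is a ZZM) shows $B^n$ is ZZM, which is the assertion.

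I do not expect a genuinely hard step here; the only points requiring care are bookkeeping ones. The first is to make sure the one-sided inverse delivered by Lemma \ref{lemmadve} is in fact two-sided — which is precisely why I would compute both $H\cdot B$ and $B\cdot H$ rather than appeal to finite-dimensional cancellation, so that the argument also covers $M=\infty$. The second is simply to check that the denominators $a_k a_{k+1}$ occurring in $\vec{d}$ never vanish, which is guaranteed by the hypothesis through Lemma \ref{lemmajedna}.
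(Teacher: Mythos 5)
Your proposal is correct and follows essentially the same route as the paper: both use the product formula of Lemma \ref{lemmadve} to exhibit the explicit ZZM inverse with $b_j=1/a_j$ and $d_k=-c_k/(a_ka_{k+1})$. Your additional care in verifying the inverse is two-sided (relevant for $M=\infty$) and in explicitly invoking the corollary on positive powers to handle $H^{-n}$ for $n>1$ only makes the paper's terser argument fully explicit.
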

\begin{proof}
It is sufficient to set the product $H^{(ZZM)}(\vec{u},\vec{v})$ in
Lemma \ref{lemmadve} equal to the unit matrix. Then we may check
that in this case (i.e., with all ${u}_j=1$ and with all $v_j=0$)
one can define the inverted matrix
$H^{(ZZM)}(\vec{b},\vec{d})=[H^{(ZZM)}(\vec{a},\vec{c})]^{-1}$ via
the following elementary formulae,
 \be
 b_j=1/a_j\,,\ \ \ \ j=1,2,\ldots,M\,,\ \ \ \ \
 d_k=-c_k /(a_k a_{k+1})\,,\ \ \ \  k =1,2,\ldots, M-1\,,
 \ee
i.e., by the formulae independent of the parity of $k$.
\end{proof}

 \noindent
In addition to these observations it is also easy to check, by the
explicit matrix multiplication, that
 \be
 H^{(ZZM)}(\vec{a},\vec{0})\,[H^{(ZZM)}(\vec{a},\vec{c})]^{-1}\,H^{(ZZM)}(\vec{a},\vec{0})=
 H^{(ZZM)}(\vec{a},-\vec{c})\,.
 \label{indefzz}
 \ee
Moreover, depending on the specific needs in applications it is also
not too difficult to deduce multiple other auxiliary formulae in
which the fractions are eliminated, say, from the formulae for the higher
negative powers of the matrices. For illustration let us only
display here their first nontrivial sample,
 \be
 [H^{(ZZM)}(\vec{a},\vec{0})]^2\,
 [H^{(ZZM)}(\vec{a},\vec{c})]^{-2}\,[H^{(ZZM)}(\vec{a},\vec{0})]^2=
 H^{(ZZM)}(\vec{\widetilde{a}},-\vec{\widetilde{c}})\,
 \ee
where
 $
 \widetilde{a}_j=a_j^2\,,\ \ \ j = 1, 2, \ldots, M$ and $
 %\,, \ \ \ \ \
 \widetilde{c}_k=c_k\,(a_k+a_{k+1})\,,\ \ \ k = 1, 2, \ldots, M-1
 $.

\end{document}